\newcommand{\beq}{\begin{equation}}
\newcommand{\eeq}{\end{equation}}
\newtheorem{theorem}{Theorem}
\title{Point Mass in the
Confidence Distribution:
Is it a Drawback or an Advantage?}
\date{}
\author{ 
\href{https://orcid.org/0000-0000-0000-0000}
{\includegraphics[scale=0.06]{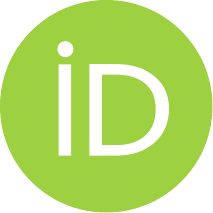}\hspace{1mm}
Hangbin~Lee}\\
Department of Statistics\\
Seoul National University\\
Seoul, 08826 \\
\texttt{hangbin221@gmail.com} \\
\And
\href{https://orcid.org/0000-0002-3447-4306}
{\includegraphics[scale=0.06]{orcid.pdf}\hspace{1mm}
Youngjo~Lee} \\
Department of Statistics\\
Seoul National University\\
Seoul, 08826 \\
\texttt{youngjo@snu.ac.kr} \\
}
\begin{document}
\maketitle

\begin{abstract}
Stein's \citeyearpar{stein59} problem highlights the phenomenon
called the probability dilution in high dimensional cases,
which is known as a fundamental deficiency in probabilistic inference.
The satellite conjunction problem also suffers from probability dilution
that poor-quality data can lead to a dilution of collision probability. 
Though various methods have been proposed,
such as generalized fiducial distribution
and the reference posterior,
they could not maintain the coverage probability 
of confidence intervals (CIs) in both problems.
On the other hand, the confidence distribution (CD)
has a point mass at zero, which has been interpreted paradoxical.
However, we show that this point mass is an advantage rather than a drawback,
because it gives a way to maintain the coverage probability of CIs.
More recently, `false confidence theorem' was presented
as another deficiency in probabilistic inferences, 
called the false confidence.
It was further claimed that the use of consonant belief 
can mitigate this deficiency.
However, we show that the false confidence theorem
cannot be applied to the CD 
in both Stein's and satellite conjunction problems.
It is crucial that a confidence feature, 
not a consonant one, 
is the key to overcome the deficiencies in probabilistic inferences.
Our findings reveal that the CD outperforms the other existing methods, 
including the consonant belief,
in the context of Stein's and satellite conjunction problems. 
Additionally, we demonstrate the ambiguity of coverage probability 
in an observed CI from the frequentist CI procedure, 
and show that the CD provides valuable information 
regarding this ambiguity.
\end{abstract}

\keywords{
confidence \and confidence distribution \and probability dilution
\and satellite conjunction \and Stein's problem
}

\section{Introduction}

\label{s: introduction} 
In 1930, \citeauthor{fisher30} introduced the fiducial distribution (FD) 
as an alternative to the Bayesian posterior distribution,
without requiring a prior.
The concept of confidence can be understood as 
a Neymanian interpretation of Fisher's FD \citep{schweder16}.
However, Fisher objected Neyman's notion of coverage probability,
as he believed that scientists would never repeat the same experiment.
Fisher's FD has encountered substantial criticism over the century,
with ongoing controversy about 
its adherence to Kolmogorov's axioms \citep{edwards77}.
\citet{wilkinson77} argued that the FD is not a probability distribution
and criticized it for its non-coherency.
Despite the controversies surrounding the FD,
there has been a growing interest, 
especially as the confidence distribution (CD),
which offers a more refined definition.
As \citet{efron98} noted, 
``Maybe Fisher's biggest blunder will become a big hit in the 21st century!'',
it is important to understand the advantages of CD. 
In this paper, we collectively refer to FD as CD, 
in accordance with the suggestion of \citet{efron98}.
We demonstrate the complementary nature of 
epistemic confidence and frequentist coverage probability,
emphasizing their synergy rather than conflict.

In Bayesian analysis, integrating out the nuisance parameter is 
often legitimate to obtain the marginal posterior. 
However, analogous to the marginalization paradox \citep{dawid73},
integrating a CD may no longer result in a valid CD.
Stein's \citeyearpar{stein59} problem serves as a benchmark 
for illustrating this phenomenon of the integrated CDs,
which leads to the CIs exhibiting undesirable behavior 
\citep{wilkinson77, schweder16}.
Here, the integrated CD is the same with 
the marginal posterior under uniform prior (UP).
To resolve this paradoxical behavior,
\citet{wilkinson77} proposed the use of the marginally defined CD, 
and \citet{bernardo79} introduced the concept of the reference prior (RP).
\citet{pawitan21} showed that the CD is not a probability 
but an extended likelihood, which is necessary to avoid 
probability-related paradoxes \citep{pawitan17}. 
Clearly, integrating out the nuisance parameters is not a legitimate way 
to obtain the marginal extended likelihood.
This implies that the integrated CD is not a proper CD in general,
and it is no longer a paradox because the CD is not a probability.
The extended likelihood principle \citep{bjornstad96} states that 
the extended likelihood contains all the information in the data.
\citet{pawitan23} proposed an epistemic CD associated with 
the full data likelihood, which leaves no room for a relevant subset.
They obtained a marginal CD with the full data likelihood 
by conditioning on maximal ancillary statistics.

In this paper, we show that the CD does not suffer from
a fundamental deficiency of probabilistic inference,
called the probability dilution.
In satellite conjunction problem, 
the probability of collision plays an important role in risk assessment. 
However, a probability dilution occurs in which 
the use of lower-quality data appears to reduce the probability of collision. 
This results in a severe and persistent underestimate of risk exposure. 
\citet{cunen20} derived a marginal CD for satellite conjunction problem, 
but \citet{martin21} pointed out that 
the marginal CD has a drawback of not allowing two-sided CIs. 
Stein's problem has a probability dilution 
with increasing number of parameters $k$, 
whereas the satellite problem has a probability dilution 
with increasing variance $\sigma^2$. 
\citet{hannig09} generalized the FD by the GFD,
which can be easily implemented in practical applications, 
but he claimed not to attempt to derive a paradox-free FD. 
We show that the integrated CD in Stein's problem can be viewed as a GFD. 
This implies that the GFD may not avoid the probability dilution.
Throughout the paper, we use UP and RP to denote 
both the priors and their corresponding marginal posteriors, 
depending on the context.
The RP can improve the UP with moderate $k$ and $\sigma$, 
but it cannot avoid the probability dilution entirely. 
In this paper, we study the role of point mass in the CD. 
The point mass in CD has been considered as paradoxical 
\citep{wilkinson77, schweder16}, 
but we show that the point mass is necessary 
to maintain the stated coverage probability of CI (confidence feature).
In order to maintain such a confidence feature,
occasionally the CD may not allow two-sided observed CI.
Thus, this property is not a drawback but an advantage of the CD.
In consequence, the CD can overcome the probability dilution entirely,
whereas the UP and RP, which do not have a point mass, 
cannot maintain the confidence feature.

Recently, \citep{balch19} raised another deficiency in probabilistic inference,
called the false confidence,
and introduced the Martin-Liu validity criterion \citep{martin15}
to identify statistical methods that are free from false confidence.
An additional consonant feature \citep{balch12} is claimed to be 
necessary for the CD to overcome the false confidence
\citep{denoeux18, balch20, martin21}.
In particular, \citet{martin21} claimed that 
the false confidence theorem \citep{balch19} can be applied to
not only the Bayesian posteriors
but also the CD in satellite conjunction problem,
which implies that the CD is at risk of false confidence.
However, 
we demonstrate that the false confidence theorem \citep{balch19}
cannot be applied to the CD in both Stein's and satellite conjunction analysis.
We further introduce the null belief theorem,
which could be interpreted as a side effect of 
satisfying the Martin-Liu validity criterion for any possible false propositions.
This precludes the use of an additional consonant feature
and raise a question; is the consonant feature indeed necessary for the CD?

In satellite conjunction problem, risk assessment of collision is 
often based on binary hypothesis testing procedures \citep{hejduk19}.
Due to the null belief, the CB cannot be used for the testing procedure
and performs uniformly worse than the confidence-based decision making. 
UP and RP also produce small collision probability under poor data 
to make a wrong decision making on satellite collision. 
For the CD to become a big hit, 
it should have an advantage over the existing methods. 
In Stein's and the satellite conjunction problems,
the presence of a point mass allows the CD 
to avoid both probability dilution and false confidence
for the proposition of interest.

\section{Ambiguity in Confidence Level of an observed CI}

\label{s: freq}

Stein's \citeyearpar{stein59}
problem plays an important role 
in illustrating the phenomenon of probability dilution
in high-dimensional cases. 
Suppose that $Y_{1},...,Y_{k}$ are independent random variables with 
$Y_{i}\sim N(\mu_{i},\sigma^{2})$ for $i=1,...,k$.
The parameter of interest is 
$\theta =||\mu ||=\sqrt{\sum_{i=1}^{k}\mu_{i}^{2}}$. 
When $\theta =0$,
the statistic $\sum_{i=1}^{k}Y_{i}^{2}$ follows 
a central Chi-squared distribution, and when $\theta >0$, 
it follows a non-central Chi-squared distribution.
\citet{stein59} demonstrated the probability dilution
in the marginal posterior under the UP when $\theta \ll k$.
More recently, there has been a growing interest
in the satellite conjunction problem, 
which can be simplified as a two-dimensional problem. 
Following \citet{cunen20} and \citet{martin21}, 
let $(Y_{1},Y_{2})$ represent the measurements 
of the true difference between the positions of two satellites,
denoted as $(\mu_{1},\mu_{2})$ along each axis. 
Suppose that $Y_{1}\sim N(\mu_{1},\sigma^{2})$ and 
$Y_{2}\sim N(\mu_{2},\sigma^{2})$ are independent.
The parameter of interest is the Euclidean distance 
between the two satellites, denoted as
$\theta =\sqrt{\mu_{1}^{2}+\mu_{2}^{2}}$. 
The measurement of $\theta$ is represented by 
$D=\sqrt{Y_{1}^{2}+Y_{2}^{2}}$. 
Thus, the parameter space $\Theta$
and sample space $\Omega_D$ are identical,
$\Theta = \Omega_D = [0, \infty)$.
To distinguish the random variables $Y_{1},Y_{2}$ and $D$ 
from their observed values, 
the latter are written in lower cases $y_{1},y_{2}$ and $d$. 
It is worth emphasizing that in the satellite problem,
the probability dilution becomes apparent as $\sigma \to \infty$,
while in Stein's problem,
it becomes severe as $k \to \infty$.
For the simplicity in our discussions,
we mainly use the satellite problem with $k=2$ for derivations.
However, our results can be applied to Stein's problem as well.
In both Stein's and satellite conjunction problems,
$\sigma$ is assumed to be known.

We start with demonstrating that
an observed CI can exhibit ambiguity
in its confidence level (coverage probability).
Suppose that we want to make a frequentist CI procedure 
based on the statistics $D$, having 
\begin{equation}
\frac{D^{2}}{\sigma^{2}}\sim \chi _{2}^{2}\left( \frac{\theta^{2}}{\sigma
^{2}}\right) ,  \label{eq:d}
\end{equation}
where $\chi _{df}^{2}(\nu )$ denotes the non-central Chi-square distribution
with the degrees of freedom $df$ and non-centrality parameter $\nu $.
Suppose that $q_{\alpha}(\theta )$ is the $(1-\alpha )$ quantile function
of $D$ such that 
\[
P_{\theta}(D\leq q_{\alpha}(\theta ))=1-\alpha , 
\]
where $\alpha \in (0,1)$. 
Since $q_{\alpha}(\theta )$ is a strictly increasing function of $\theta$ 
for any $\alpha \in (0,1)$, 
there exists an inverse function $q_{\alpha}^{-1}(d)$ such that 
\[
P_{\theta =q_{\alpha}^{-1}(d)}(D\leq d)
=P_{\theta}(q_{\alpha}^{-1}(D)\leq\theta)
=1-\alpha. 
\]
However, the range of $q_{\alpha}(\theta )$ is $[q_{\alpha}(0),\infty)$
where $q_{\alpha}(0)$ is the $(1-\alpha )$ quantile of 
the central Chi-square distribution. 
Therefore, $q_{\alpha}^{-1}(d)$ is not defined for $d<q_{\alpha}(0)$. 
We let $q_{\alpha}^{-1}(d)=0$ for such $d$. 
For $\alpha=0$ and $1$, 
let $q_{0}^{-1}(d)=\lim_{\alpha \rightarrow 0}q_{\alpha}^{-1}(d)=0$ 
and $q_{1}^{-1}(d)=\lim_{\alpha \rightarrow 1}q_{\alpha}^{-1}(d)=\infty$.
To have a frequentist CI procedure with $\alpha$-level of confidence,
in this paper we consider the following CI procedures,
\begin{equation}
\text{CI}_{\alpha}(D)=[\theta_{L}(D),\theta_{U}(D)),  \label{eq:CI}
\end{equation}
and 
\begin{equation}
\text{CI}^*_{\alpha}(D)=[\theta_{L}(D),\theta_{U}(D)],  \label{eq:CI_closed}
\end{equation}
where $\theta_{L}(D)=q_{1-\alpha -\beta}^{-1}(D)$ 
and $\theta_{U}(D)=q_{1-\beta}^{-1}(D)$ 
for some $0\leq \beta \leq 1-\alpha$. 
$\text{CI}_\alpha^*(D)$ is a closed interval,
whereas $\text{CI}_\alpha(D)$ is a half-closed interval.
When $\beta=0$ or $1-\alpha$, the CI procedure become one-sided,
either $[\theta_{L}(D),\infty )$ or $[0,\theta_{U}(D))$, respectively. 
When $0<\beta <1-\alpha$, it becomes two-sided.
For example, if $\alpha =0.9$ and $\beta =0.05$, 
$\theta_{L}(D)=q_{0.05}^{-1}(D)$ and $\theta_{U}(D)=q_{0.95}^{-1}(D))$ 
with the confidence level $\alpha =0.95-0.05=0.9$. 
Then, the coverage probabilities of the CI procedures
\eqref{eq:CI} and \eqref{eq:CI_closed}
are equivalent to the confidence level $\alpha$ for all $\theta>0$,
\begin{align*}
P_{\theta}(\theta \in \text{CI}_{\alpha}(D))
&=P_{\theta}(\theta_{L}(D)\leq \theta < \theta_{U}(D))
=P_{\theta}(\theta_{L}(D)\leq \theta)-P_{\theta}(\theta_{U}(D)\leq \theta)
=\alpha +\beta -\beta
=\alpha,
\\
P_{\theta}(\theta \in \text{CI}^*_{\alpha}(D))
&=P_{\theta}(\theta_{L}(D)\leq \theta \leq \theta_{U}(D))
=P_{\theta}(\theta_{L}(D)\leq \theta)-P_{\theta}(\theta_{U}(D) < \theta)
=\alpha + P_{\theta}(\theta_{U}(D) = \theta)
=\alpha. 
\end{align*}
However, for $\theta=0$, the CI procedure \eqref{eq:CI} has
$$
P_{\theta=0}(0\in \text{CI}_{\alpha}(D))
=P_{\theta=0}(\theta_{L}(D)=0<\theta_{U}(D))
=\alpha,
$$
whereas the CI procedure \eqref{eq:CI_closed} has
$$
P_{\theta=0}(0\in \text{CI}^*_{\alpha}(D))
=P_{\theta=0}(\theta_{L}(D)=0\leq\theta_{U}(D))
=\alpha + P_{\theta=0}(\theta_{L}(D)=\theta_{U}(D)=0)
>\alpha.
$$
Given the observed data $D=d$, 
the two-sided CI procedure \eqref{eq:CI} 
with $0<\beta<1-\alpha$ 
leads to the following observed CI,
$\text{CI}_{\alpha}(d)=[\theta_{L}(d),\theta_{U}(d))$:
\begin{itemize}
\item[(a)]  If $d>q_{1-\alpha -\beta}(0)$, the observed CI becomes
two-sided interval $[\theta_{L}(d),\theta_{U}(d))$
with $\theta_L(d) > 0$.
\item[(b)]  If $q_{1-\beta}(0)<d\leq q_{1-\alpha -\beta}(0)$, the observed
CI becomes one-sided interval $[0,\theta_{U}(d))$.
\item[(c)]  If $d\leq q_{1-\beta}(0)$, the observed CI becomes an empty
interval $[0,0)$.
\end{itemize}
On the other hand, the CI procedure \eqref{eq:CI_closed}
leads to the following observed CI:
\begin{itemize}
\item[(a)]  If $d>q_{1-\alpha -\beta}(0)$, the observed CI becomes
two-sided closed interval $[\theta_{L}(d),\theta_{U}(d)]$.
\item[(b)]  If $q_{1-\beta}(0)<d\leq q_{1-\alpha -\beta}(0)$, the observed
CI becomes one-sided closed interval $[0,\theta_{U}(d)]$.
\item[(c)]  If $d\leq q_{1-\beta}(0)$, the observed CI becomes $\{0\}$.
\end{itemize}

\noindent Thus, the main difference between the observed CIs from 
\eqref{eq:CI} and \eqref{eq:CI_closed} occurs in case (c);
\eqref{eq:CI} provides an empty set 
and \eqref{eq:CI_closed} provides an interval $\{0\}$.
Although the empty CI seems not natural,
it is important to note that only the procedure \eqref{eq:CI}
maintains the coverage probability $\alpha$ 
for all $\theta \in \Theta$, including $\theta=0$,
whereas the procedure \eqref{eq:CI_closed} gives 
a conservative interval at $\theta=0$.
However, the open CI procedure $(\theta_L(D), \theta_U(D))$
cannot maintain the coverage probability.

As an illustrative example, 
consider the case where $\sigma=1$ and $\beta =0.05$.
Figure \ref{fig:ci} presents three CI procedures \eqref{eq:CI}
with different confidence level;
one-sided CI procedure with $\theta_{L}(D)=0$ for $\alpha=0.95$ 
and two-sided CI procedures for $\alpha=0.90$ and $0.60$.
All three CIs have a common upper bound,
$\theta_{U}(d)=q_{1-\beta}^{-1}(d)=q_{0.95}^{-1}(d)$. 
In the figures, the horizontal axis and vertical axis represent 
$d$ and $\theta$, respectively. 
Dashed lines and the solid lines represent
$\theta_{U}(d)$ and $\theta_{L}(d)$, respectively. 
For $\alpha=0.90$ ($0.60$), the two-sided CI procedure yields
a two-sided observed CI when $d>2.448$ ($d>1.449$).
When $d\leq 0.320$, all three CI procedures result in empty intervals. 
In the figures, horizontal arrows indicate the area 
$A=\{\ d:\theta_{0}=1\in \text{CI}(d)\ \}$, 
where $\theta_{0}$ is the true value of $\theta$. 
Thus, if $d\in A$, the observed CI contains 
the true parameter value $\theta_{0}=1$. 
Furthermore, $P(A)=P(\theta_{0}\in \text{CI}(D))=\alpha$ 
implies that these three frequentist CI procedures have the correct coverage probabilities.
Vertical arrows indicate the observed CIs at $d=1$, $2$, and $3$. 
For instance, if $d=2$, the observed CI $[0,3.451)$ could be a realization of 
either a 95\% or 90\% frequentist CI procedure. 
Similarly, if $d=1$, the observed CI $[0,2.287)$ could be a realization of 
a 60\%, 90\% or 95\% frequentist CI procedure. 
Consequently, given an observed CI, 
its coverage probability (confidence level) may not be uniquely determined. 
This raises questions about the meaning of confidence level for an observed CI.

\begin{figure}[tbp]
\centering
\includegraphics[width=0.9\linewidth]{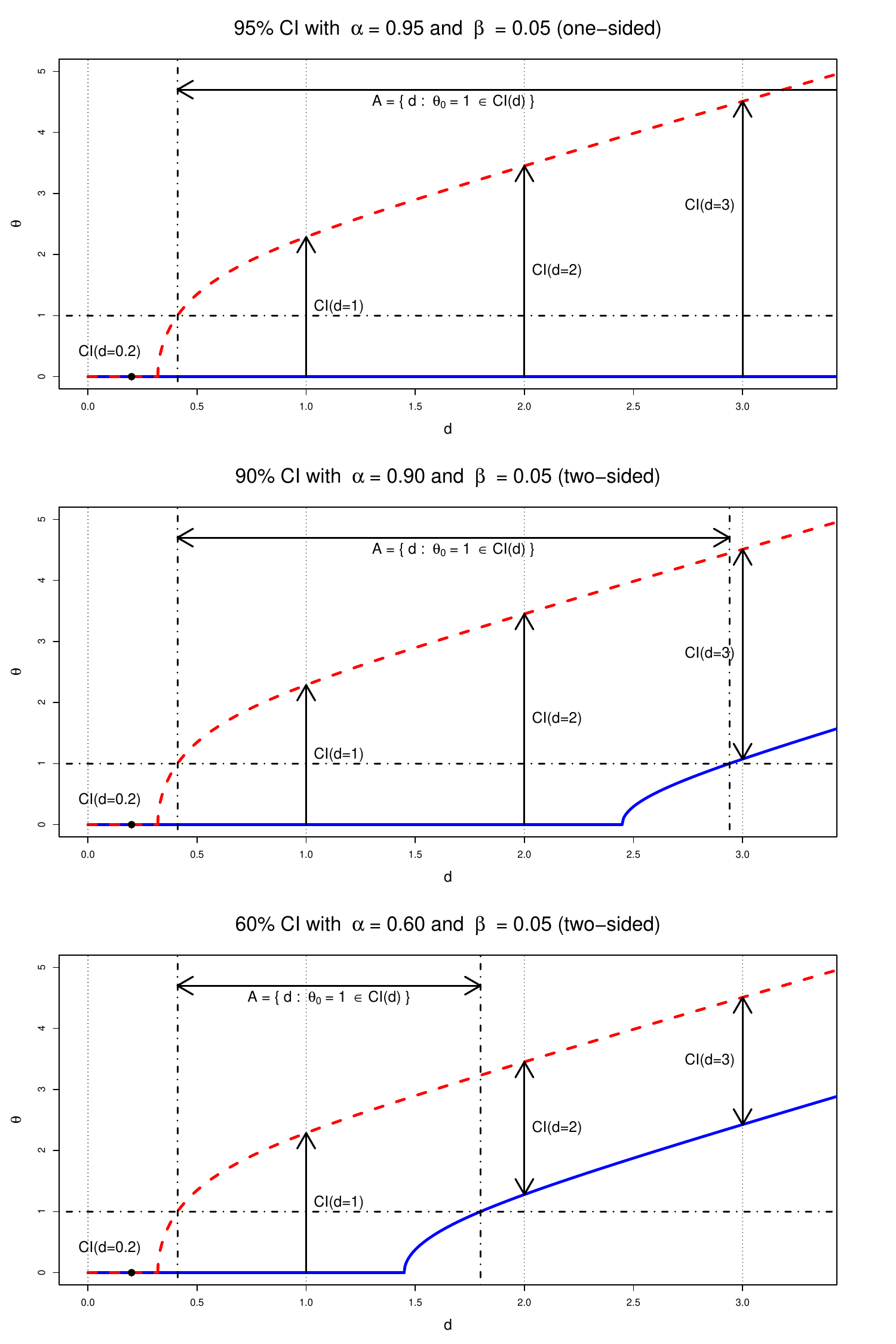}
\caption{Confidence intervals with 
(a) $\alpha =0.95$, (b) $\alpha =0.90$, (c) $\alpha =0.60$. 
For all three CIs, $\beta=0.05$.}
\label{fig:ci}
\end{figure}

\section{GFD and Probability Dilution}

\citet{fisher30} used a sufficient statistic to construct the CD. 
Since $(y_{1}, y_{2})$ are sufficient statistics for $(\mu_{1},\mu_{2})$, 
consider a joint CD for $(\mu_{1},\mu_{2})$ as follows: 
\[
C_{f}(\mu_{1},\mu_{2};y_{1},y_{2})
=P_{\mu_{1},\mu_{2}}(Y_{1}\geq y_{1}
\text{ and}Y_{2}\geq y_{2}). 
\]
This joint CD leads to a joint confidence density, 
\[
c_{f}(\mu_{1},\mu_{2};y_{1},y_{2})
=\frac{\partial^{2}C_{f}(\mu_{1},\mu_{2};y_{1},y_{2})}
{\partial \mu_{1}\partial \mu_{2}}
=\frac{1}{\sigma^{2}}\phi \left( \frac{\mu_{1}-y_{1}}{\sigma}\right) \phi 
\left( \frac{\mu_{2}-y_{2}}{\sigma}\right) 
=L(\mu_{1},\mu_{2};y_{1},y_{2}), 
\]
where $\phi (\cdot )$ is the density function of 
$N(0,1)$ and $L(\mu_{1},\mu_{2};y_{1},y_{2})$ is the likelihood. 
Thus, it is also a joint posterior 
under the UP for $(\mu_{1},\mu_{2})$. 
Integrated CD and marginal posterior of $\theta$ are identically obtained by,
\begin{equation}
G(\theta ;d)
=\int_{\mu_{1}^{2}+\mu_{2}^{2}\leq \theta^{2}}
c_{f}(\mu_{1},\mu_{2};y_{1},y_{2}) \ d(\mu_{1},\mu_{2})
=\Gamma_{2}\left( \frac{\theta^{2}}{\sigma^{2}};
\frac{d^{2}}{\sigma^{2}}\right),
\label{eq:up}
\end{equation}
which gives the density 
\begin{equation}
g(\theta ;d)
=\frac{\partial}{\partial \theta} G(\theta ;d)
=\frac{2\theta}{\sigma^{2}} \ \gamma_{2}\left( 
\frac{\theta^{2}}{\sigma^{2}};\frac{d^{2}}{\sigma^{2}}\right),
\end{equation}
where $\gamma_{2}(\theta;\cdot)=d\Gamma_{2}(\theta;\cdot)/d\theta$.
we denote them as UP.
Stein (1959) noted probability dilution of UP when $k\rightarrow \infty$. 
\citet{wilkinson77} and \citet{pedersen78}
showed that the UP cannot maintain the correct coverage probability.
All these problems remain to hold in satellite conjunction problem with $k=2$,
as we shall discuss.

\citet{hannig09} introduced the GFD as a generalization of the FD. 
We first show that the integrated CD can be viewed as a GFD. 
Consider a data generating mechanism 
\[
(Y_{1},Y_{2})=(\theta \cos \theta +\sigma U_{1},\theta \sin \theta +\sigma U_{2}), 
\]
where $U_{1}$ and $U_{2}$ are independent random variables from $N(0,1)$. 
We can define the set-valued function as 
\[
Q(y_{1},y_{2},U_{1}^{*},U_{2}^{*})
=\left\{ \ \theta :(y_{1},y_{2})=(\theta \cos \theta +\sigma U_{1}^{*},
\theta \sin \theta +\sigma U_{2}^{*})\ \right\} 
=\sigma \sqrt{\left( \frac{y_{1}}{\sigma} -U_{1}^{*}\right)^{2}
+\left( \frac{y_{2}}{\sigma}-U_{2}^{*}\right)^{2}}
\sim \sigma \sqrt{\chi _{2}^{2}\left(\frac{d^{2}}{\sigma^{2}}\right)}, 
\]
then $G(\theta ;d)$ satisfies the definition of GFD.
It is worth noting that the marginal CD and RP in the next section are also 
included in the class of GFD.
However, as the UP (integrated CD) and RP cannot maintain the correct coverage probability,
namely the confidence feature,
GFD may not be a desirable generalization of the CD.

\section{CD and Related Methods}

To resolve the probability dilution, 
\citet{wilkinson77} and \citet{cunen20} 
proposed the use of the marginally defined CD below.
In the context of Stein's and satellite conjunction problems,
we call this marginal CD the CD
because it maintains the confidence feature
as we shall show.
Let $(\theta,\psi )$ and $(D,T)$ be the polar coordinate representations 
of $(\mu_{1},\mu_{2})$ and $(Y_{1},Y_{2})$,
\[
(\mu_{1},\mu_{2})=(\theta \cos \psi ,\theta \sin \psi )
\quad \text{and} \quad 
(Y_{1},Y_{2})=(D\cos T,D\sin T), 
\]
where $\psi =\tan^{-1}(\mu_{2}/\mu_{1})$ and $T=\tan^{-1}(Y_{2}/Y_{1})$.
Here, the distributions of $T$ and $T|D$ still depend on 
both $\theta $ and $\psi$; hence, 
$D$ alone is not a sufficient statistic for $\theta $ under the full data $(y_{1},y_{2})$. 
If we have a maximal ancillary statistic we may derive the CD with full data information 
based on the conditional distribution of $D|A$ \citep{pawitan23}. 
But, in Stein's and satellite conjunction problems, 
the maximal ancillary statistic is not known. 
However, the current definition of the CD \citep{schweder16} 
only requires that 
\begin{equation}
C(\theta_{0};D)\sim \text{Uniform}[0,1]  
\label{eq:def}
\end{equation}
at the true value $\theta_{0}$ of $\theta $, which guarantees that the CD
maintains the confidence feature.
\citet{cunen20} derived the marginal CD for $\theta $, 
based on the statistics $D$ 
\begin{equation}
C(\theta ;d)=P_{\theta}(D\geq d)
=1-\Gamma_{2}\left( \frac{d^{2}}{\sigma^{2}};
\frac{\theta^{2}}{\sigma^{2}} \right),
\label{eq:cd}
\end{equation}
where $\Gamma_{2}(\cdot;\nu)$ denotes the non-central Chi-square distribution function, 
and they showed that this CD does not have probability dilution. 
This is coincide with Wilkinson's \citeyearpar{wilkinson77} 
marginal CD for Stein's problem with $k>2$. 
This CD satisfies the current definition (\ref{eq:def}) of the CD by maintaing the confidence feature.
Let $\Theta $ be the parameter space of $\theta$ 
and $\Omega_{D}$ is the sample space of $D$. 
It is worth noting that \citet{wilkinson77}
interpreted a point mass at $\theta=0$ as an unassigned probability,
since he considered the parameter space as $\Theta =(0,\infty)$.
Thus, in his view, the CD is not a probability to have $C(\Theta)<1$.
In this context, the point mass at zero looks paradoxical \citep{schweder16}.
However, if we assume $\Theta =[0,\infty )$, then $C(\Theta)=1$ holds without an unassigned probability.
\citet{martin21} insisted that the CD is also at risk of false confidence.
However, this arises from a misunderstanding about the CD, 
because zero is not accounted for within the parameter space, $\Theta=(0,\infty)$.
In this paper, we show that the above mentioned properties are not drawbacks,
but rather they are indeed advantages of the CD
to maintain the confidence feature
for all $\theta \in \Theta = [0,\infty)$.

With a slight abuse of notation, 
we denote the confidence assigned to 
a proposition $A\subset \Theta$ as
\[
C(A)=C(A;d)=C(\theta \in A)=\int_{A}c(\theta ;d)d\theta.
\]
The CD can have a point mass at $\theta =0$, since 
\[
C(\{0\})=1-\Gamma_{2}\left( \frac{d^{2}}{\sigma^{2}};0\right)
\geq 0.
\]
This CD maintains the confidence feature, 
\[
C(\theta_{0};D)=C([0,\theta_{0}];D)
=1-\Gamma_{2}\left( \frac{D^{2}}{\sigma^{2}};
\frac{\theta_{0}^{2}}{\sigma^{2}}\right) 
\sim \text{Uniform}[0,1], 
\]
to give correct coverage probability for any true value 
$\theta_{0}\in \Theta$.

Let $M(D)=C(\{0\};D)$ denote the point mass at $\theta =0$ and $M(d)$ denote
its realized value of the point mass. As $\sigma \rightarrow 0$ or $\theta
\rightarrow \infty $, the point mass $M(D)$ vanishes 
\[
M(D)\stackrel{p}{\rightarrow}1-\Gamma_{2}(\infty ;0)=0. 
\]
As $\sigma \rightarrow \infty $ or $\theta \rightarrow 0$, 
\[
M(D)\stackrel{d}{\rightarrow}\text{Uniform}[0,1]. 
\]
Here the confidence density can be expressed as 
\[
c(\theta ;d)=M(d)\cdot \mathbb{d}(\theta )+c_{+}(\theta ;d), 
\]
where $\mathbb{d}(\theta )$ denotes the Dirac delta function to give a point
mass at $\theta =0$ and $c_{+}(\theta ;d)=\partial P_{\theta}(D\geq
d)/\partial \theta$ for $\theta >0$.

\subsection{CD and Confidence Level of an Observed CI}

A point mass at a boundary prevents the probability dilution to maintain the
confidence feature. We first investigate a necessary and sufficient
condition for a point mass in the CD. Suppose that 
$\theta \in \Theta $ is
the parameter of continuous scalar statistic $D\in \Omega_{D}$ 
and the $1-\alpha$ quantile $q_{\alpha}(\theta)$ 
is a strictly increasing function of $\theta$ 
for any $\alpha \in (0,1)$. 
Then, we have the following theorem with the proof in Appendix. 
\begin{theorem}
Let $\partial \Omega_D$ and $\partial \Theta$ denote 
the boundary of $\Omega_D$ and $\Theta$, respectively.
Then, $C(\theta; d)$
has no point mass if and only if
\begin{equation}
\label{eq:mass}
q_\alpha(\theta) \to \partial \Omega_D
\quad \text{as }
\theta \to \partial \Theta,
\quad \text{for any} \alpha  \in (0, 1).
\end{equation}
\end{theorem}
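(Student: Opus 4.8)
\noindent The plan is to reduce the claim to the behaviour of the confidence distribution function $\theta\mapsto C(\theta;d)=P_\theta(D\ge d)$ near the endpoints of $\Theta$. Since $q_\alpha(\theta)$ is strictly increasing in $\theta$ for every $\alpha$, the family $\{P_\theta\}$ is stochastically ordered, so $\theta\mapsto C(\theta;d)$ is nondecreasing; and since $D$ is continuous and the quantile maps are continuous in $\theta$ (the same regularity under which the paper writes $c(\theta;d)=\partial C(\theta;d)/\partial\theta$), $\theta\mapsto C(\theta;d)$ is continuous on the interior of $\Theta$. Hence the confidence distribution has no atom in $\mathrm{int}\,\Theta$, and any point mass must sit at $\partial\Theta=\{\theta_L,\theta_U\}$ with $\theta_L=\inf\Theta$ and $\theta_U=\sup\Theta$. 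Decomposing the total confidence as $1=M_L(d)+\int_{\mathrm{int}\,\Theta}c(\theta;d)\,d\theta+M_U(d)$ and using monotonicity of $C$, the boundary masses are $M_L(d)=\lim_{\theta\downarrow\theta_L}C(\theta;d)$ and $M_U(d)=1-\lim_{\theta\uparrow\theta_U}C(\theta;d)$, both nonnegative; so $C(\cdot;d)$ has no point mass exactly when $\lim_{\theta\downarrow\theta_L}C(\theta;d)=0$ and $\lim_{\theta\uparrow\theta_U}C(\theta;d)=1$.

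\noindent Next I would translate each limit into \eqref{eq:mass}. Writing $C(\theta;d)=1-F_\theta(d)$ with $F_\theta$ the continuous distribution function of $D$, the first condition says $F_\theta(d)\to 1$ for every interior $d\in\Omega_D$, and the second says $F_\theta(d)\to 0$ for every such $d$. The bridge is the elementary equivalence, valid because the $q_\alpha$ are monotone in $\theta$: as $\theta\downarrow\theta_L$, one has $F_\theta(d)\to 1$ for all $d>\inf\Omega_D$ iff $q_\alpha(\theta)\to\inf\Omega_D$ for all $\alpha\in(0,1)$. For the forward direction, $q_\alpha(\theta)<d$ forces $F_\theta(d)\ge 1-\alpha$, so for each $\alpha$ the hypothesis makes $F_\theta(d)\ge 1-\alpha$ eventually, and $\alpha$ is arbitrary; for the converse, if $q_\alpha(\theta)$ decreased to some $L>\inf\Omega_D$ for one $\alpha$, then $F_\theta(d)\le 1-\alpha$ for $d\in(\inf\Omega_D,L)$, contradicting $F_\theta(d)\to1$. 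The mirror-image argument at $\theta_U$ shows $F_\theta(d)\to0$ for all interior $d$ iff $q_\alpha(\theta)\to\sup\Omega_D$ for all $\alpha$. Putting the two endpoints together, and using that $q_\alpha$ increasing makes $\inf\Theta$ correspond to $\inf\Omega_D$ and $\sup\Theta$ to $\sup\Omega_D$, gives precisely condition \eqref{eq:mass}.

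\noindent It remains to fix the quantifiers so the equivalence is clean: ``no point mass'' is read as $M_L(d)=M_U(d)=0$ for every interior $d$, while conversely, if \eqref{eq:mass} fails --- say $q_\alpha(\theta)$ decreases to $L>\inf\Omega_D$ for some $\alpha$ --- then for any $d\in(\inf\Omega_D,L)$ monotonicity gives $q_\alpha(\theta)\ge L>d$, hence $C(\theta;d)\ge\alpha$ near $\theta_L$, so $M_L(d)\ge\alpha>0$ is a genuine point mass (and symmetrically at $\theta_U$). I expect the main obstacle to be the interior-continuity step: the stated hypotheses give monotonicity and invertibility of each $q_\alpha$ but not, on their face, joint continuity of $\theta\mapsto F_\theta(d)$, so ruling out an atom at an interior $\theta_1$ needs a short argument --- a downward jump of $F_\theta(d)$ at $\theta_1$ would force $D\mid\theta$ to carry mass bounded below on an interval shrinking to $\{d\}$ as $\theta\downarrow\theta_1$, contradicting continuity of $D$ --- or else one simply invokes the regularity the paper already assumes for the confidence density. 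Everything else is routine quantile/distribution-function bookkeeping.
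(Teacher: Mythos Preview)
Your proposal is correct and follows essentially the same route as the paper: both reduce the question to whether $\lim_{\theta\to\theta_L}C(\theta;d)=0$ and $\lim_{\theta\to\theta_U}C(\theta;d)=1$, and both establish the equivalence with \eqref{eq:mass} by the same quantile bookkeeping (if $q_\alpha(\theta_L)>D_L$, pick $d^*$ in between to force $C(\theta_L;d^*)\ge\alpha>0$; conversely, $q_\alpha(\theta_L)=D_L$ traps $C(\theta_L;d)\le\alpha$ for every $\alpha$). The only substantive addition in your write-up is the preliminary argument that atoms cannot sit in $\mathrm{int}\,\Theta$, which the paper simply takes for granted; your flagged concern about interior continuity is legitimate but, as you note, is covered by the regularity already assumed for the confidence density.
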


\noindent 
\citet{pawitan23} considered a curved exponential model. 
Let $y=1$ be an observation from $Y\sim N(\theta ,\theta^{2})$ 
for $\theta \geq 0$, then one may consider a confidence distribution, 
\[
C(\theta ;y)=P_{\theta}(Y\geq y)
=1-\Phi \left( \frac{y-\theta}{\theta}\right) ,
\]
where $\Phi (\cdot )$ denotes the cumulative function of $N(0,1)$. 
However, this leads to 
\[
\lim_{\theta \rightarrow \infty}C(\theta ;y)=1-\Phi (-1)\approx 0.84<1.
\]
Here $C(\{0\};y=1)=0.$\ Thus, there is no point mass at $\theta =0.$
Following \citet{wilkinson77}, one may say that
this CD has an unassigned probability $0.16=1-0.84$. 
This problem occurs because the quantile function $q_{\alpha}(\theta)$ 
is not increasing function of $\theta$. 
Now let $d=|y|$ be an observation of $D=|Y|$ with $\Omega_{D}=\Theta$. 
Then the corresponding CD is defined as 
\[
C(\theta ;d)
=P_{\theta}(D\geq d)
=1-\Phi \left( \frac{d-\theta}{\theta}\right) 
+\Phi \left( \frac{-d-\theta}{\theta}\right),
\]
which becomes a proper distribution function without a point mass 
\[
\lim_{\theta \rightarrow 0}C(\theta ;d)
=1-\Phi (\infty )+\Phi (-\infty)=0
\quad \text{and}\quad 
\lim_{\theta \rightarrow \infty}C(\theta;d)
=1-\Phi(-1)+\Phi(-1)=1.
\]
When there is no point mass, under appropriate conditions, 
\citet{pawitan23} showed that 
\[
C(\theta \in CI(d))
=\int_{CI(d)}c(\theta ;d)d\theta 
=P_{\theta}(\theta \in CI(D)),
\]
where the LHS is the epistemic confidence of the observed CI 
and the RHS is the frequentist coverage probability of the CI procedure. 
Thus, the confidence of the
observed CI corresponds to the coverage probability of the CI procedure. 

In Stein's and satellite conjunction problems, 
lower bounds of $\Omega_{D}$ and $\Theta $ are zero. 
However, $q_{\alpha}(0)\neq 0 \in \partial \Omega_D$. '
Thus, the theorem implies that the corresponding CD has a point mass at zero. 
Now we extend relationship between frequentist coverage probability 
and epistemic confidence when there is a point mass. 
The confidence of the observed $\text{CI}_{\alpha}(d)$ is as follows, 
corresponding to the CI procedure \eqref{eq:CI} 
in Section \ref{s: freq}:

\begin{itemize}
\item[(a)]  When the observed CD is two-sided ($\theta_{L}(d)>0)$, 
we have 
\[
C(\text{CI}_{\alpha}(d))
=C(\theta <\theta_{U}(d);d)-C(\theta <\theta_{L}(d);d)
=(1-\beta )-(1-\alpha -\beta )
=\alpha .
\]

\item[(b)]  When the observed CD becomes one-sided 
($\theta_{L}(d)=0$ and $\theta_{U}(d)>0$), we have 
\[
C(\text{CI}_{\alpha}(d))
=C(\theta <\theta_{U}(d);d)
=1-\beta 
=\max \left\{
\ \alpha :\text{CI}_{\alpha}(d)=[0,\theta_{U}(d))\text{ for some}0\leq
\beta \leq 1-\alpha \ \right\} ,
\]
which is the maximum coverage probability (confidence level) among CI
procedures having the observed CI $[0,\theta_{U}(d))$. 
For example, in Figure \ref{fig:ci}, 
both $\alpha =0.95$ and $\alpha =0.9$ lead to the same observed CI 
$[0,3.451)$ for $d=2$. Here, the CD gives a confidence 
\[
C([0,3.451);d=2)=1-\beta =0.95.
\]

\item[(c)]  When the observed CI becomes an empty set 
($\theta_{L}(d)=\theta_{U}(d)=0$), we have 
\[
M(d)=C(\{0\};d)
=1-\Gamma_{2}\left( \frac{d^{2}}{\sigma^{2}};0\right) 
=\max \left\{ \ \alpha :\text{CI}_{\alpha}(d)
=\emptyset \ \text{ for some} 0 \leq \beta \leq 1-\alpha \ \right\},
\]
which is the maximum confidence level among CI procedures, having an empty interval. 
The point mass leads to a nice interpretation. 
In Figure \ref{fig:ci}, all the three procedures lead to 
$\text{CI}_{\alpha}(d=0.2)=\emptyset $. 
Here, the point mass of CD is 
\[
M(d=0.2)=C(\{0\};d)=0.980,
\]
which implies that the CI procedure produces an empty observed CI if $\alpha <0.98$. 
Thus, given $d=0.2$, we can allow the CI only with $\alpha \geq 0.98$.
Here the point $\{0\}$ has the confidence level 0.98.
\[
C(\theta \in CI(d))=\max P_{\theta}(\theta \in CI(D))
=\max \ \{\ \alpha :\text{CI}_{\alpha}(d)=CI(d) \
\text{ for some}0\leq \beta \leq 1-\alpha \ \}.
\]
\end{itemize}

\noindent Given an observed CI, its CD gives the maximum attainable coverage
probability (confidence level) among CI\ procedures,
\[
C(\theta \in CI(d))
=\max P_{\theta}(\theta \in CI(D))
=\max \ \{\ \alpha : \text{CI}_{\alpha}(d)=CI(d) \ 
\text{ for some}0\leq \beta \leq 1-\alpha \ \}.
\]

\subsection{CD versus UP}

\begin{figure}[tbp]
\centering
\includegraphics[width=0.9\linewidth]{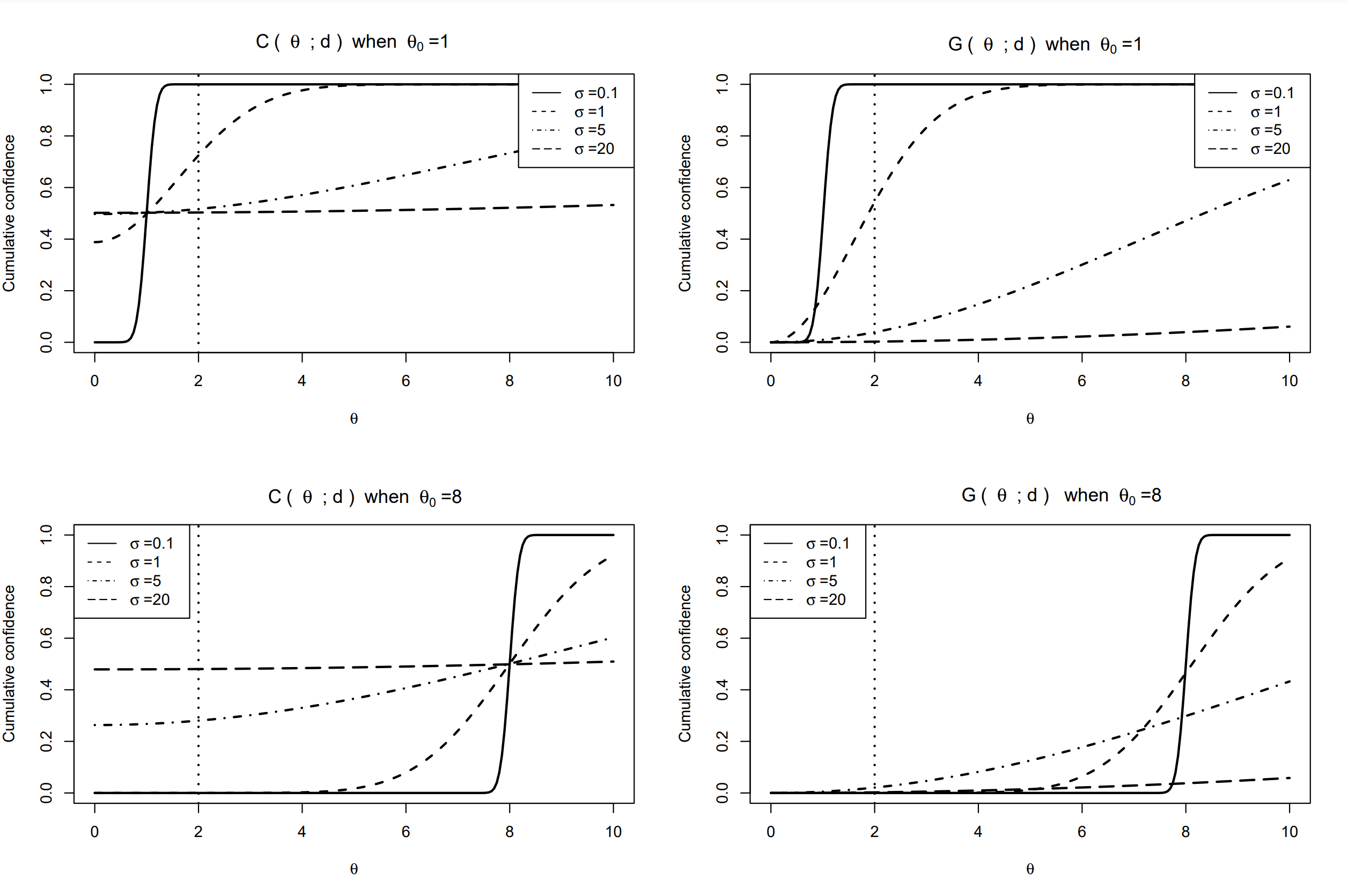}
\caption{Average of $C(\theta ;d)$ and $G(\theta ,d)$ over
10,000 repeats.}
\label{cumulatives}
\end{figure}
\citet{martin21} claimed that a drawback of CD is 
that it cannot have the two-sided CIs. 
We see that the CD allows a two-sided CI procedure, but it
can produce a one-sided observed CI to maintain the confidence feature.
We conjecture that any procedure
that always produces a two-sided interval
cannot maintain the confidence feature. 
Figure \ref{cumulatives} shows the averages of cumulative functions, based on CD \eqref{eq:cd} and UP \eqref{eq:up} from 10,000 repetitions, where $\theta_{0}$ is 1 or 8 and $\sigma$
varies in (0.1, 1, 5, 20). 
Both provide the cumulative distribution for $\theta $. 
Compared with the CD, the UP has apparent probability dilution.
The CD and UP become identical when $\sigma \rightarrow 0$. 
However, they can be quite different when $\sigma $ is large. 
Since the CD can have a point mass at zero, $C(\{0\})\geq0$, 
while the UP does not: $G(\{0\})=0$, 
the UP can always provide a two-sided interval, 
but it leads to probability dilution, 
losing the confidence feature as we shall see.

\subsection{CD versus RP}

\citet{bernardo79} proposed the RP to resolve 
the probability dilution of the UP when $\theta \ll k$. 
Figure \ref{coverage-two} shows the coverage probabilities 
of the one-sided and two-sided 80\% CIs for satellite conjunction problem $(k=2)$ 
and Stein's problem $(k=100)$, computed from 10,000 repetitions.
Probability dilution of UP is evident, especially when $\theta \ll k$. 
RP improves UP, but it also has probability dilution 
when $\theta$ is small or $\sigma$ is large.
Both RP and UP are the probability on $\Theta =(0,1)$
and always allow two-sided observed CIs,
because they do not have a point mass at zero.
This causes a probability dilution that 
both RP and UP cannot maintain the confidence feature.
Note that the CD-based two-sided CI procedure automatically produces the one-sided observed CI
to maintain confidence feature
when the observation $d$ is small. 
The figure shows that only the CD maintains the confidence feature 
for all $\theta \in \Theta = [0, \infty)$.

\begin{figure}[tbp]
\centering
\makebox{\includegraphics[width=\linewidth]{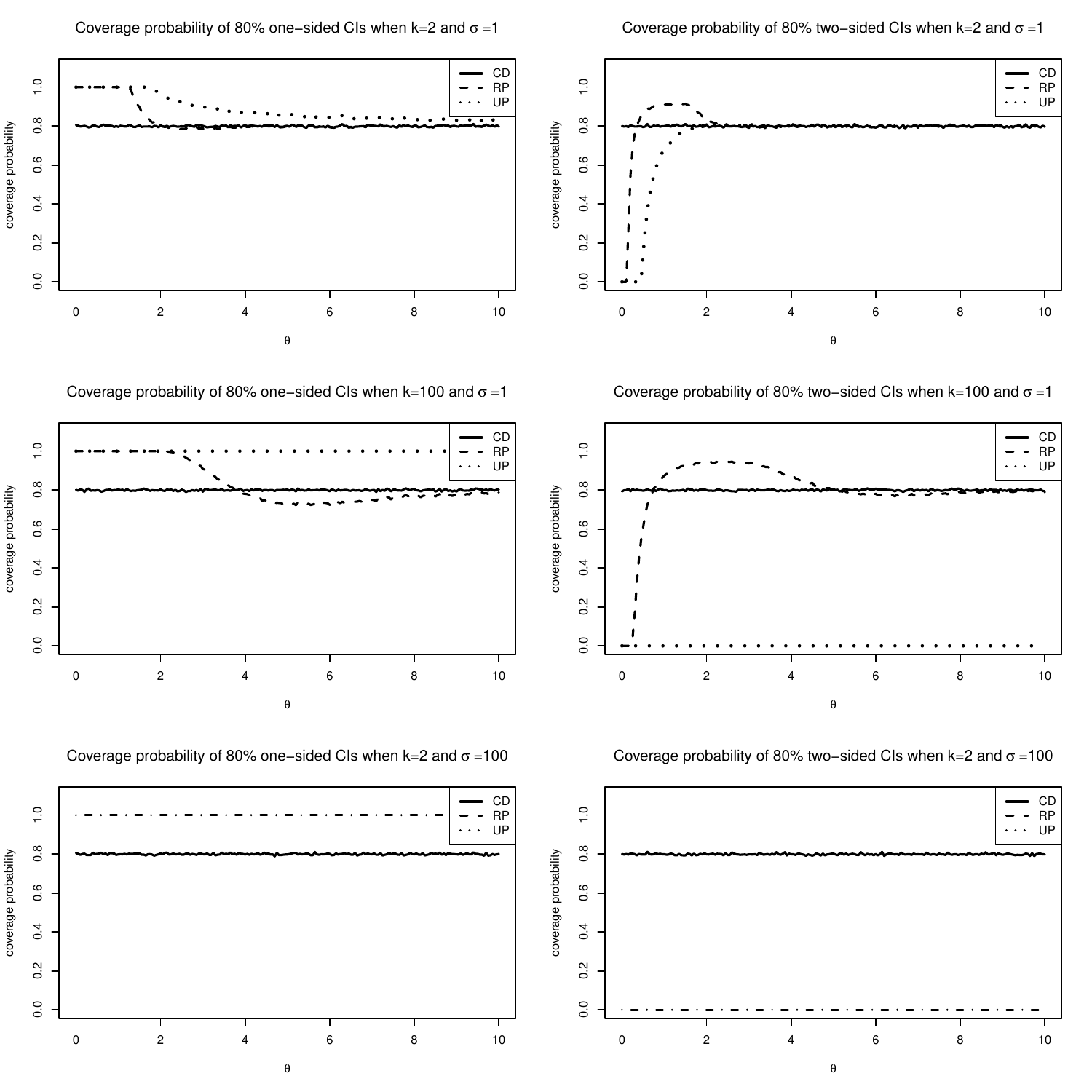}}
\caption{Coverage probabilities of 80\% CIs based on CD, UP, and RP 
when $k=2$ and $k=100$.}
\label{coverage-two}
\end{figure}

\section{On False Confidence}
\label{s: CB} 

\citet{balch19} noted that 
the inference about some proposition of interest $A$
suffers from severe false confidence if
for some unacceptably high $p\in (0,1)$
and unacceptably high $1-\alpha$ where $\alpha \in (0,1)$,
there exists some putative value of $\theta$ such that
$$
\theta \notin A 
\quad \text{and} \quad 
P_{\theta}\{C(A;D)\geq 1-\alpha \}\geq p.
$$
They claimed that this false confidence
is a fundamental deficiency in probabilistic inference,
and introduced the false confidence theorem below,
under the assumption that 
$\sup_{\theta} c(\theta;d) < \infty$ almost everywhere in $D$
for any true $\theta_{0}\in \Theta$.

\noindent \textbf{False confidence theorem \citep{balch19}} 
\textit{\ For any $\theta_0$, any $\alpha \in (0,1)$ and any $p\in (0,1)$,
there exists a subset $A\subset \Theta $ such that 
$\theta_{0}\notin A$ and}
\begin{equation}
P_{\theta_{0}}\{C(A;D)\geq 1-\alpha \}\geq p.  
\label{eq:false confidence}
\end{equation}

\noindent
In their response to \citet{cunen20},
\citet{mandel21} claimed that 
the false confidence theorem applies to 
not only the Bayesian posteriors, such as UP and RP,
but also the CD in satellite conjunction problem.
However, the false confidence theorem cannot be applied to the CD due to the presence of point mass.
For substantive understanding, 
consider a case of $\theta_{0}=0$,
which implies that $D$ follows the central Chi-square distribution.
Then, for any false proposition $A\subseteq \{\theta :\theta \neq 0\}$, 
\[
C(A;D)\leq C(\theta \neq 0;D)=1-M(D)\sim \text{Uniform}[0,1], 
\]
which leads to
\[
P_{\theta_{0}=0}\{C(A;D)\geq 1-\alpha \}\leq P_{\theta_{0}=0}\{M(D)\leq
\alpha \}=\alpha . 
\]
Thus, the false confidence theorem cannot be applied to the CD
in both Stein's and satellite conjunction problems.
\citet{balch19} proposed to use
the Martin-Liu validity criterion \citep{martin15}:
A statistical method is free from false confidence if
for any $\alpha \in [0,1]$ and any false proposition 
$A \subset \Theta$ such that $\theta_0 \notin A$,
\begin{equation}
\label{eq:martin-liu}
P_{\theta_{0}}\{C(A;D)\geq 1-\alpha \} \leq \alpha.     
\end{equation}
This criterion requires protection 
for any false proposition.
The existence of false proposition with a high confidence 
could be annoying if it is of interest.
However,
we may not need a protection from all false propositions.
For example, consider a false proposition
$A = \{ \ \theta: \theta \neq 1 \ \}$
with $\theta_0 = 1$.
Then, $A$ does not satisfy \eqref{eq:martin-liu}
for CD, UP, and RP, since $C(\{1\})=0$.
However, such a false proposition $A$ is not of interest 
since the true $\theta_0$ is unknown.
Thus, it may be enough to satisfy \eqref{eq:martin-liu}
only for the proposition of interest.
In satellite conjunction problem,
the proposition of interest is clearly identified,
whether the satellites collide or not.
Let $R$ be the sum of the radii of two satellites and let $H_{0}$ 
(collision; $\theta \leq R$) be the true, 
and let $H_{1}$ (non-collision; $\theta >R$) be the false proposition $A$. 
Then, the level of false confidence becomes 
\[
P_{\theta_{0}}\{C(H_{1};D)\geq 1-\alpha \}
=P_{\theta_{0}}\{C(H_{0};D)\leq \alpha \}
\leq P_{\theta_{0}}\{C(\theta_{0};D)\leq \alpha \}
=\alpha . 
\]
Hence, if $H_{0}$ is true, 
the level of false confidence cannot grow arbitrarily large. 
This satisfies the Martin-Liu validity criterion \eqref{eq:martin-liu} 
for $H_{1}$, the false proposition of interest. 
Thus, unlike the UP and RP,
the CD does not allow high false confidence for a false
proposition of interest.

To satisfy the Martin-Liu criterion,
\citet{martin21} proposed the use of CB,
\begin{equation}
\text{Bel}(A;d)=1-\sup_{\theta \in A^{c}}\text{pls}(\theta ;d),
\label{eq:CB}
\end{equation}
where $\text{Bel}(\cdot ;d)$ is the consonant belief function and 
$\text{pls}(\theta ;d)=1-|2C(\theta ;d)-1|$ is the plausibility contour. 
This CB can avoid the false confidence of any false proposition,
i.e., satisfy the Martin-Liu validity criterion.
However, this unnecessary thorough protection of CBs 
against all possible false propositions
make them vulnerable in other prospects of statistical inferences. 
Here we introduce the `null belief theorem',
which excludes the use of CBs,
from the opposite point of view of the false confidence theorem
under the same assumption. 
\begin{theorem}[Null belief theorem]
Consider a CB $\text{Bel}(\cdot ;d)$ characterized by either a CD or posterior probability.
Then, for any true $\theta_{0}\in \Theta $ and any $p\in (0,1)$,
there exists an interval $I$ with positive length such that
\begin{equation*}
\theta_{0}\in I\subset \Theta
\quad \text{and}\quad
P_{\theta_{0}}\{\text{Bel}(I;D)=0\}\geq p.
\end{equation*}
\end{theorem}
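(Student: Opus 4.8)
The plan is to exploit the fact that a consonant belief function $\text{Bel}(\cdot;d)$ built from a continuous one-dimensional contour collapses to zero on any interval that fails to contain the argmax of the plausibility contour, and then to argue that with probability at least $p$ the realized contour places its argmax far from $\theta_0$. Concretely, recall that $\text{pls}(\theta;d)=1-|2C(\theta;d)-1|$ is maximized (equal to $1$) exactly at the value $\theta=m(d)$ solving $C(m(d);d)=1/2$, i.e. the confidence median. For an interval $I=[\theta_0-\delta,\theta_0+\delta]\cap\Theta$, formula \eqref{eq:CB} gives $\text{Bel}(I;d)=1-\sup_{\theta\notin I}\text{pls}(\theta;d)$, and this equals $0$ whenever $\sup_{\theta\notin I}\text{pls}(\theta;d)=1$, which happens precisely when $m(d)\notin I$ (since then the argmax of the unimodal-in-the-sense-of-the-contour function lies in the complement, so the sup over the complement still attains $1$). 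So $\{\text{Bel}(I;D)=0\}\supseteq\{m(D)\notin I\}=\{|m(D)-\theta_0|>\delta\}$ (modulo boundary bookkeeping when $\theta_0$ sits on $\partial\Theta$, handled below).

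First I would make the contour/argmax statement precise: since $C(\theta;d)$ is, for each fixed $d$, monotone in $\theta$ (it is a distribution function in $\theta$, or equivalently $P_\theta(D\ge d)$ is monotone because $q_\alpha(\theta)$ is strictly increasing), the plausibility contour $\text{pls}(\theta;d)$ is nondecreasing up to the point where $C=1/2$ and nonincreasing afterward, so $\sup_{\theta\in S}\text{pls}(\theta;d)=1$ for any set $S$ whose closure meets $\{C(\theta;d)\ge 1/2\}$ and $\{C(\theta;d)\le 1/2\}$ on opposite sides of $m(d)$ — in particular for $S=I^c$ whenever $m(d)\notin I^\circ$. Second, I would control $m(D)$: by the defining property \eqref{eq:def}, $C(\theta_0;D)\sim\text{Uniform}[0,1]$, and more generally for the CD \eqref{eq:cd}, $C(\theta;D)$ is stochastically monotone in $\theta$. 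The key quantitative input is that the sampling distribution of $m(D)$ under $\theta_0$ is nondegenerate and spreads mass away from $\theta_0$: as $\sigma$ is fixed and finite, $P_{\theta_0}(|m(D)-\theta_0|>\delta)$ is a continuous function of $\delta$ that tends to $1$ as $\delta\downarrow 0$. Hence, given $p\in(0,1)$, choose $\delta>0$ small enough that $P_{\theta_0}(|m(D)-\theta_0|>\delta)\ge p$; then $I=[\theta_0-\delta,\theta_0+\delta]\cap\Theta$ has positive length, contains $\theta_0$, lies in $\Theta$, and satisfies $P_{\theta_0}\{\text{Bel}(I;D)=0\}\ge P_{\theta_0}\{m(D)\notin I\}\ge p$.

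The same argument covers the posterior-probability case verbatim: the construction of $\text{Bel}$ via the plausibility contour uses only that $C(\cdot;d)$ is a (possibly point-mass-carrying) distribution function on $\Theta$, so replacing $C$ by the posterior c.d.f. and $m(D)$ by the posterior median changes nothing, provided the posterior median has a nondegenerate sampling distribution under $\theta_0$ — which it does in both Stein's and the satellite problem since the data are continuous and informative.

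The main obstacle I anticipate is the boundary case $\theta_0\in\partial\Theta$, i.e. $\theta_0=0$, where the one-sided interval $I=[0,\delta]$ cannot be escaped ``from below.'' Here $\{\text{Bel}(I;D)=0\}$ needs the contour's argmax $m(D)$ to exceed $\delta$, i.e. $C(\delta;D)<1/2$; one must check $P_{\theta_0=0}\{C(\delta;D)<1/2\}\to 1$ as $\delta\downarrow 0$, which follows because $C(\delta;D)=1-\Gamma_2(D^2/\sigma^2;\delta^2/\sigma^2)\to 1-\Gamma_2(D^2/\sigma^2;0)=M(D)$ pointwise as $\delta\downarrow0$, and $M(D)\sim\text{Uniform}[0,1]$ when $\theta_0=0$, so the limiting probability is $P\{M(D)<1/2\}=1/2$ — which is \emph{not} close to $1$. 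This shows the naive ``shrink $\delta$'' move fails at the boundary, and instead one should allow $I$ to be an interior interval not forced to touch $0$, or invoke the point mass directly: when $\text{Bel}$ is characterized by the CD, $\text{pls}(0;d)=1-|2M(d)-1|$ and for the argmax to sit strictly inside $(0,\infty)$ one needs $M(d)<1/2$, again an event of probability $1/2$ at best. The cleanest fix, which I would pursue, is to take $I$ to be a short interval around a nearby interior point and use continuity of $\theta\mapsto P_\theta(\cdot)$ together with the fact that at $\theta_0=0$ the theorem statement only requires \emph{some} $I$ with $\theta_0\in I$ — so $I=[0,\delta]$ with $\delta$ chosen via the (genuinely $\to 1$) probability $P_{\theta_0=0}\{m(D)>\delta\}$, noting that as $\delta\downarrow0$ this probability tends to $P_{\theta_0=0}\{m(D)>0\}=1$ because the confidence median $m(D)$ is almost surely strictly positive when $D>q_{1/2}(0)>0$, and $P_{\theta_0=0}(D>q_{1/2}(0))=1/2$ — so even this requires care and is the crux of the argument. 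I expect the final proof to isolate, via the explicit $\Gamma_2$ forms, a clean lower bound on $P_{\theta_0}\{m(D)\notin I\}$ that is valid uniformly enough to beat any prescribed $p<1$.
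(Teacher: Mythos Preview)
Your core argument is exactly the paper's: identify the confidence (or posterior) median $\hat\theta(d)$ --- your $m(d)$ --- as the unique point where the plausibility contour equals $1$, observe that $\text{Bel}(I;d)=0$ whenever $\hat\theta(d)\notin I$, and then shrink the interval around $\theta_0$ until $P_{\theta_0}\{\hat\theta(D)\notin I\}\ge p$. The paper's proof is three lines and stops there; it never engages the boundary case $\theta_0\in\partial\Theta$ that occupies the second half of your proposal. The reason is that the theorem is stated ``under the same assumption'' as the false confidence theorem, namely $\sup_\theta c(\theta;d)<\infty$ almost surely, which rules out a point mass in $C(\cdot;d)$ and guarantees the median is atomless under $P_{\theta_0}$. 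Your extended discussion of $\theta_0=0$ with the CD point mass correctly identifies that the shrink-$\delta$ argument can stall at probability $1/2$ there --- a genuine subtlety the paper simply does not confront --- but for the theorem as stated that regime is excluded by hypothesis, so invoking the boundedness assumption at the outset would have spared you the detour.
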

\begin{proof}
First, take a small interval near the true $\theta_{0}$. 
Let $\hat{\theta}(d)$ be the median of the CD such that 
$C\left( \hat{\theta}(d);d\right) =0.5$. 
Then, for any $p\in (0,1)$,\ there exists $\epsilon >0$ such that
\begin{equation*}
P_{\theta_{0}}
\{\hat{\theta}(D) \in (\theta_{0}-\epsilon ,\theta_{0}+\epsilon )\}
\leq 1-p.
\end{equation*}
Let $I=(\theta_{0}-\epsilon ,\theta_{0}+\epsilon )$ be an interval
that contains the true value $\theta_{0}$ but is such that
\begin{equation*}
P_{\theta_{0}}\{\text{Bel}(I;D)=0\}=P_{\theta_{0}}\{\hat{\theta}(D)\not\in
I\}\geq p,
\end{equation*}
then the theorem is proved. \hfill \qed
\end{proof}

Even when an interval $I$ contains the true value $\theta_{0}$, 
the probability of $\text{Bel}(I;D)=0$ is greater than arbitrary large $p>0$,
which seems counter-intuitive. 
On the other hand, the CD, along with the RP and UP, 
is free from the null belief theorem. 
While the belief function could be useful for the \textit{trinary} decision problem 
with the combination of an additional 
\textit{plausibility function} \citep{dempster68},
the null belief theorem implies that the belief function alone 
may not be suitable for a valid hypothesis testing, 
as we shall demonstrate later.

\subsection{False Confidence and Probability Dilution}

\label{s: dilution}

\citet{balch19} claimed that probability dilution 
is a symptom of the false confidence.
However, in this section, 
we emphasize the need to distinguish
probability dilution from false confidence. 
From now on, we will differentiate 
between confidence and probability, 
as confidence is an extended likelihood, 
not a probability \citep{pawitan21}.
Let $H_{0}:\theta \leq R$ be an assertion of collision and 
$H_{1}:$ not $H_{0}$ be an assertion of non-collision. 
Here, $G(H_{0})=G(\theta \leq R)=G([0,R])$ 
is the probability of collision based on the UP. 
Probability dilution means that 
\[
G(H_{0})\rightarrow 0\quad \text{as}\sigma \rightarrow \infty . 
\]
This causes a severe and persistent underestimate of risk exposure. 
Based on the CD, 
\citet{cunen20} investigated the confidence of collision $C(H_{0})$ 
at the true value $\theta_{0}=1.99\approx 2=R$, 
and $C(H_{0})=C([0,2])\approx C([0,1.99])\sim \text{Uniform}[0,1]$ 
for any $\sigma$. 
Thus, in their simulation study, 
the average confidence of collision remained close to $0.5$ for all $\sigma$. 
In general, the point mass is less than or equal to $C(\theta_{0};d)$, 
\[
M(d)=C(0;d)\leq C(\theta_{0};d)\sim \text{Uniform}[0,1], 
\]
but when $\sigma \rightarrow \infty $ or $\theta_{0}\rightarrow 0$, 
the point mass at $\theta =0$ converges in distribution to Uniform$[0,1]$, 
\[
M(D)=1-\Gamma_{2}\left( \frac{D^{2}}{\sigma^{2}};0\right) 
\stackrel{d}{\rightarrow}\text{Uniform}[0,1]. 
\]
Thus, as $\sigma \rightarrow \infty $, 
the average of $M(D)$ converges to 0.5.
This prevents the dilution of confidence.

\begin{figure}[tbp]
\centering
\includegraphics[width=0.9\linewidth]{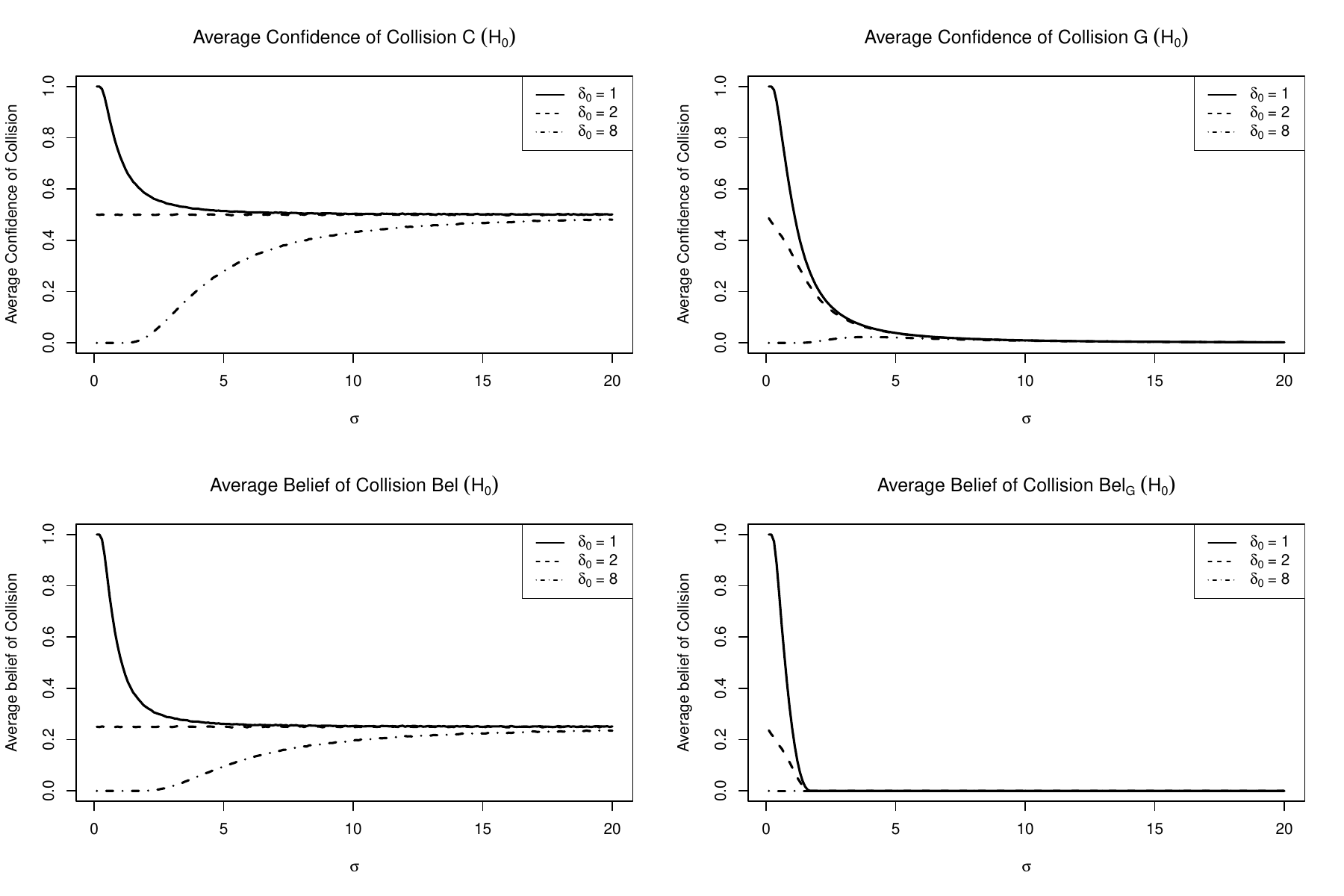}
\caption{Average confidences and beliefs of collision over 100,000
repetitions.}
\label{fig:collision}
\end{figure}

Figure \ref{fig:collision} shows the average confidences and beliefs of
collision as $\sigma $ varies from 0 to 20. As $\sigma $
increases, $C(H_{0})$ decreases to 0.5 when $\theta_{0}=1$, and $C(H_{0})$
increases to 0.5 when $\theta_{0}=8$. We see that these phenomena are
caused by a point mass at zero: $C(\{0\})>0$. 
$\text{Bel}(H_{0})$ is the CB \eqref{eq:CB} based on the CD, 
$C(\theta ;d)$. Bel$(H_{0})$ converges to 0.223 as $\sigma \rightarrow \infty $. 
We see that Bel$(H_{0})<C(H_{0})$,
i.e., the additional consonant feature in CB leads to dilution of confidence. 
The UP has no point mass, so $G(H_{0})$ goes to zero as 
$\sigma \rightarrow \infty $. 
$\text{Bel}_{G}(\cdot )$ is the CB function \eqref{eq:CB} based on the UP. 
$\text{Bel}_{G}([0,\theta ])$ has no point
mass, so it has a probability dilution. Thus, it is the
confidence feature, not the consonant feature, 
which prevents the dilution of confidence.

\section{Hypothesis Testing}

\label{s: test}
Hypothesis testing procedures are often used in risk assessment for
satellite conjunction problem. Depending on certain fundamental questions,
the null hypothesis $H_{0}$ can be either $\theta \leq R$ (collision) 
or $\theta > R$ (non-collision).
The probability (confidence) of collision is the most frequently used test
statistic in satellite conjunction problem \citep{hejduk19b}. For
illustration, we suppose that $H_{0}$ is the assertion of collision. 
Then, from the property of CD, 
\[
C(\theta \leq \theta_{0};D)\sim \text{Uniform}[0,1], 
\]
the confidence $C(H_{0};d)$ directly becomes the observed p-value for
testing $H_{0}$, i.e., 
\[
\max_{\theta \in H_{0}}P_{\theta}\left( C(H_{0};D)\leq \alpha \right)
=\alpha . 
\]
Thus, the CD yields $\alpha $-level hypothesis testing procedure for any $\sigma$. 
However, since probabilities such as the UP and RP have no point mass, 
as $\sigma \rightarrow \infty $ 
\[
G(H_{0};D)\rightarrow 0
\quad \text{and}\quad 
R(H_{0};D)\rightarrow 0. 
\]
Thus, if the data are of a poor quality ($\sigma $ is large), 
the UP and RP cannot accept the null hypothesis even if $d<R$. 
For example, suppose that we observe $d=1<R=2$. When $\sigma =1$, 
the CD, UP and RP give $C(H_{0})=0.918$, $G(H_{0})=0.731$ and $R(H_{0})=0.891$, respectively. 
Thus, all of them would not reject $H_{0}$. Here the RP becomes close to the CD.
However, in satellite conjunction problem, 
$\sigma$ is often much greater than $R$ \citep{balch19}. 
When $\sigma=100$, the CD yields $C(H_{0})=1.000$, 
hence the CD would not reject $H_{0}$. 
However, $G(H_{0})=0.000$ and $R(H_{0})=0.016 < 0.05$ to reject $H_{0}$ 
though the observed value implies the collision $(d<R)$. 
Therefore, if the collision risk is given by the CD, 
there is no reason for engineers to ignore an impending collision risk 
due to the negligible confidence of collision. 
However, if it is based on the UP or RP, 
engineers may ignore the impending danger 
because of the dilution of collision probability $G(H_{0})\approx 0$ 
and $R(H_{0})\approx 0$.
In Stein's problem it is often of interest to test 
\[
H_{0}:\ \theta =0\quad \text{vs.}\quad H_{1}:\ \theta \neq 0. 
\]
Due to the point mass at $\theta =0$, the CD gives 
\[
P_{\theta \in H_{0}}(C(H_{0};D)<\alpha )=P_{\theta =0}(C(\{0\};D)<\alpha
)=\alpha . 
\]
Thus, if we use $C(H_{0})$ as a p-value, we can directly achieve a valid
hypothesis testing procedure with 
\[
P_{H_{0}}(\text{Reject}H_{0})=P_{\theta =0}(C(H_{0})<\alpha )=P_{\theta
=0}(M(D)<\alpha )=\alpha . 
\]
On the other hand, the UP and RP have no point mass. 
Thus, $G(H_{0})$ and $R(H_{0})$ do not lead to a valid hypothesis testing, 
because $G(H_{0})=R(H_{0})=0$ for any observation $d$.

Now, suppose that we use the CB based on the CD in satellite conjunction problem, 
$\text{Bel}(H_{0})=\text{Bel}([0,R])$. Note here that the CB often becomes zero, 
due to the null belief theorem. When true $\theta_{0}=R$ (collision), we have 
\[
P_{\theta_{0}=R}\left\{ \text{Bel}(H_{0})=0\right\} 
=P_{\theta_{0}=R}\left\{ C(R;D)\leq 0.5\right\} =0.5. 
\]
Suppose that the CB is used for testing by rejecting $H_{0}$ 
if $\{$Bel$(H_{0})\leq \alpha \}$. Then, 
\[
P_{\theta_{0}=R}(\text{Reject}H_{0})
=P_{\theta_{0}=R}\left\{ \text{Bel}(H_{0})\leq \alpha \right\} 
\geq P_{\theta_{0}=R}\left\{ \text{Bel}(H_{0})=0\right\} =0.5. 
\]
Thus, though $H_{0}$ (collision) is true, the CB rejects $H_{0}$ with probability 0.5. 
It implies that the CB cannot achieve the significance level under 0.5. 
Similarly, Bel$_{G}(H_{0})$ cannot achieve the significance level under 0.847. 
Thus, the CBs cannot be applied to hypothesis testing for risk assessment.

When $\sigma =\infty $, the data $(y_{1},y_{2})$ are meaningless 
as an estimate of $(\mu_{1},\mu_{2})$. 
However, poor data cannot justify the small collision probability 
$G(H_{0})\approx 0$ under impending collision situations. 
The low collision probability ($G(H_{0})$ or $R(H_{0})$)
does not mean that the two satellites are far apart; 
it is only a statement of the general unlikelihood of such an alignment 
if all one knows is that the two satellites happen to be 
in the same general area \citep{hejduk19}. 
However, it is undesirable for engineers to ignore impending danger 
because they believe a negligible collision probability 
caused by poor data. 
Since $C(H_{0})=1-C(H_{1})\stackrel{d}{\rightarrow}\text{Uniform}[0,1]$ 
as $\sigma \rightarrow \infty$, 
the CD always acknowledges a non-negligible confidence of collision 
even with poor data. 
In this respect, the CD is useful for lowering the impending risk 
in satellite conjunction problem.

\section{Concluding Remarks}

The concept of Neyman's confidence provides 
an objective frequentist interpretation of the CI procedure, 
while the CD offers an epistemic interpretation of an observed CI. 
These two concepts are complementary,
as they are coincide under appropriate conditions.
The confidence of an observed CI can attain 
the objective frequentist interpretation 
of the coverage probability in repeated experiments.
Given the data, an observed CI from frequentist CI procedure 
can attain a meaningful epistemic interpretation via CD, 
especially when its coverage probability is ambiguous. 
We demonstrate in Figure \ref{fig:collision} 
that the presence of a consonant feature alone 
cannot prevent probability dilution
but rather accelerates it.
It is confidence feature that is the key to preventing 
a severe and persistent underestimate of collision probability. 
A point mass in the CD allows to preserve the confidence feature,
while probabilities, such as UP and RP,
lose the confidence feature.
GFD includes the UP (integrated CD) and RP as well as the marginal CD, 
so it is not an ideal generalization of CD 
that maintains the confidence feature.
Since the CD is an extended likelihood
\citep{pawitan21}, 
the integrated CD could no longer be a valid CD. 
The marginal CD \eqref{eq:cd} is based on the distribution of 
a non-sufficient statistic $D$, 
potentially not exploiting all the information in the data. 
If a maximal ancillary statistic exists, 
it is possible to derive the CD for $\theta$ 
by conditioning the maximal ancillary statistic 
without loss of information,
allowing no relevant subset \citep{pawitan23}.
It could be an interesting future research 
to find a way to construct a CD 
with the full data likelihood in general.
Many aspects of the CD still require further investigation and understanding.

\bibliographystyle{apalike}
\bibliography{references}

\section*{Appendix}

\subsection*{Proof of necessary and sufficient condition for existence of a point mass}

\label{s: iff}

\begin{proof}
Let $D_U$ and $D_L$ be the upper and lower bounds of $\Omega_D$,
and let $\theta_U$ and $\theta_L$ be the upper and lower bounds of $\Theta$.
Since the quantile function is continuous,
we write $q_\alpha(\theta_L) = \lim_{\theta\to\theta_L}q_\alpha(\theta)$
and $q_\alpha(\theta_U) = \lim_{\theta\to\theta_U}q_\alpha(\theta)$.
Note here that we allow the bounds to be $\pm \infty$.

\noindent ($\Rightarrow$)
Suppose that there exists $0 \leq \alpha \leq 1$ such that
$$
q_\alpha(\theta_L) \neq D_L
\quad \text{or} \quad
q_\alpha(\theta_U) \neq D_U.
$$
If $q_\alpha(\theta_L) \neq D_L$, 
there exists $d^*$ such that
$D_L < d^* < q_\alpha(\theta_L)$.
Since $C(\theta; d^*)$ does not have a point mass,
$$
C(\theta_L; d^*) = P_{\theta_L}(D\geq d^*) = 0.
$$
However, by definition of the quantile,
$$
C(\theta_L; d^*)
> C(\theta_L; q_\alpha(\theta_L))
= P_{\theta_L}(D\geq q_\alpha(\theta_L))
= \alpha > 0.
$$
This leads to contradiction.
If $q_\alpha(\theta_U) \neq D_U$, 
there exists $d^*$ such that
$q_\alpha(\theta_U) < d^* < D_U$.
Similarly,
$$
C(\theta_U; d^*) = P_{\theta_U}(D\geq d^*) = 1,
$$
but we have
$$
C(\theta_U; d^*)
< C(\theta_U; q_\alpha(\theta_U))
= P_{\theta_U}(D\geq q_\alpha(\theta_U))
= \alpha < 1,
$$
which leads to contradiction.
Thus, 
$q_\alpha(\theta) \to \partial \Omega_D$
as $\theta \to \partial \Theta$ 
for all $\alpha \in (0, 1)$.

\noindent ($\Leftarrow$) 
Let $d$ be an arbitrary value in $\Omega_D$,
then \eqref{eq:mass} leads to
$$
C(\theta_L; d) 
= P_{\theta_L}(D\geq d)
\geq P_{\theta_L}(D\geq q_\alpha(\theta_L))
= \alpha.
$$
Taking $\alpha=0$ leads to $C(\theta_L; d) = 0$.
Similarly we can obtain $C(\theta_U; d) = 1$.
Thus, $C(\theta;d)$ does not have a point mass for any $d\in \Omega_D$.
\end{proof}

\end{document}